\def\BibTeX{{\rm B\kern-.05em{\sc i\kern-.025em b}\kern-.08em
		T\kern-.1667em\lower.7ex\hbox{E}\kern-.125emX}}
\newtheoremstyle{noparens} % name of the style to be used
{2pt} % measure of space to leave above the theorem. e.g.: 3pt
{2pt} % measure of space to leave below the theorem. e.g.: 3pt
{\normalfont} % name of font to use in the body of the theorem
{1em} % measure of space to indent
{\bfseries\itshape} % name of head font
{.} % punctuation between head and body
{ } % space after theorem head; " " = normal interword space
{\thmname{#1} \thmnumber{#2}\mdseries\thmnote{ (\hspace{-0.25pt}#3)}} % Manually specify head
\theoremstyle{noparens}
\newtheorem{theorem}{{\textbf{Theorem}}}
\newcommand{\Rmnum}[1]{\expandafter\@slowromancap\romannumeral #1@}
\newtheoremstyle{roman} % name of the style to be used
{2pt} % measure of space to leave above the theorem. e.g.: 3pt
{2pt} % measure of space to leave below the theorem. e.g.: 3pt
{\normalfont} % name of font to use in the body of the theorem
{1em} % measure of space to indent
{\bfseries\itshape} % name of head font
{.} % punctuation between head and body
{ } % space after theorem head; " " = normal interword space
{\thmname{#1} \Rmnum{#2}\mdseries\thmnote{ (\hspace{-0.25pt}#3)}} % Manually specify head
\theoremstyle{roman}
\renewenvironment{proof}[1][\proofname]{\par
	\pushQED{\qed}%
	\normalfont \topsep2\p@\@plus2\p@\relax
	\trivlist
	\item[\hskip\labelsep
	\hskip\dimexpr1.5em\relax % 调整这里的值来设置缩进
	\itshape
	#1\@addpunct{:}]\ignorespaces
}{%
	\popQED\endtrivlist\@endpefalse
}
\begin{document}
\title{SCL Decoding of Non-Binary Linear Block Codes}
\author{Jingyu Lin, Li Chen, \textit{Senior Member, IEEE}, and Xiaoqian Ye
\thanks{%This work was supported in part by the National Natural Science Foundation of China under Grant 62071498; and in part by the Natural Science Foundation of Guangdong Province under Grant 2024A1515010213. 
	
Jingyu Lin and Li Chen are with the School of Electronics and Information Technology, Sun Yat-sen University, Guangzhou 510006, China and Guangdong Province Key Laboratory of Information Security Technology, Guangzhou 510006, China (e-mail: linjy228@mail2.sysu.edu.cn; chenli55@mail.sysu.edu.cn). Xiaoqian Ye is with the School of Electronics and Information Technology, Sun Yat-sen University, Guangzhou 510006, China (e-mail: yexq26@mail2.sysu.edu.cn).}}

\markboth{IEEE Communications Letters, submitted}%
{How to Use the IEEEtran \LaTeX \ Templates}

\maketitle

\begin{abstract}
 Non-binary linear block codes (NB-LBCs) are an important class of error-correcting codes that are especially competent in correcting burst errors. They have broad applications in modern communications and storage systems. However, efficient soft-decision decoding of these codes remains to be further developed. This paper proposes successive cancellation list (SCL) decoding for NB-LBCs that are defined over a finite field of characteristic two, i.e., $\mathbb{F}_{2^r}$, where $r$ is the extension degree. By establishing a one-to-$r$ mapping between the binary composition of each non-binary codeword and $r$ binary polar codewords, SCL decoding of the $r$ polar codes can be performed with a complexity that is sub-quadratic in the codeword length. A simplified path sorting is further proposed to facilitate the decoding. Simulation results on short-length extended Reed-Solomon (eRS) and non-binary extended BCH (NB-eBCH) codes show that SCL decoding can outperform their state-of-the-art soft-decision decoding with fewer finite field arithmetic operations. For length-$16$ eRS codes, their maximum-likelihood (ML) decoding performances can be approached with a moderate list size.
\end{abstract}

\begin{IEEEkeywords}
Non-binary linear block codes, successive cancellation list decoding, soft-decision decoding.
\end{IEEEkeywords}

\section{Introduction}
\IEEEPARstart{N}{on-binary} linear block codes (NB-LBCs) are an important class of error-correcting codes with wide applications in modern communications and storage systems. They are especially competent in correcting burst errors. The celebrated NB-LBCs include Reed-Solomon (RS) codes\cite{reed1960polynomial}, algebraic-geometry (AG) codes\cite{goppa1977codes}, and non-binary BCH (NB-BCH) codes\cite{bose1960class,hocquenghem1959codes}. Their algebraic decoding can be categorized into the syndrome-based approach and the curve-fitting-based approach. The former can efficiently correct errors up to half the code's minimum Hamming distance, leading to the Berlekamp-Massey (BM) algorithm \cite{berlekamp, massey2003shift} being widely adopted in RS coded systems. The latter is also known as the Guruswami-Sudan (GS) algorithm\cite{GS}. It improves the error-correction capability beyond this bound while retaining a polynomial-time decoding complexity. 

Further performance improvement can be achieved by utilizing soft information observed from the channel. The algebraic soft-decision decoding (ASD), also known as the K\"otter-Vardy (KV) decoding \cite{koetter2003algebraic}, enhances the decoding of RS codes by converting the reliability information into the interpolation multiplicities. For RS codes of length $N$, KV decoding with a maximum output list size $L$ exhibits a complexity of $\mathcal{O}(N^2L^5)$\cite{xingchen}.  Another soft-decision decoding approach is the Chase decoding\cite{Chase}. It constructs  $2^{\eta}$ test-vectors by flipping the $\eta$ least reliable symbols. With each test-vector decoded by the BM algorithm, the Chase-BM decoding exhibits a complexity of $\mathcal{O}(2^\eta N^2)$. Hence, soft-decision decoding of NB-LBCs are generally more complex. Efficient soft-decision decoding approach for NB-LBCs remains to be further  developed.

In \cite{dynamicfrozen}, it was revealed that any linear block code can be interpreted as a polar code with dynamic frozen symbols. Subsequently, successive cancellation (SC) decoding of RS codes was proposed in \cite{RS_SC}, where RS codes are transformed into non-binary polar codes. Sequential decoding and permutation-based decoding were applied in \cite{RS_SC, RS_SCS, RS_SCP} to further improve the decoding performance. Recent research proposed a general transformation from binary linear block codes (B-LBCs) to polar codes with dynamic frozen symbols\cite{linhuang, linhuang2}. It established a one-to-one mapping between B-LBC codewords and polar codewords, in which a permutation matrix is required to adjust the information set of the polar code. Consequently, SC and SC list (SCL) \cite{tal2015list, niu2012crc} decoding, and their fast decoding version\cite{Simplified, broadcast, fastSCL_journal}, can be applied to decode B-LBCs. 

In this paper, SCL decoding is proposed for NB-LBCs that are defined over a finite field of characteristic two, i.e., $\mathbb{F}_{2^r}$, where $r$ is the extension degree. By establishing a one-to-$r$ mapping between the binary composition of each non-binary codeword and $r$ binary polar codewords, SCL decoding of the $r$ polar codes can be performed with a complexity of $\mathcal{O}(rLN\text{log}_2N)$, where $L$ is the SCL decoding list size. A simplified  path sorting is further proposed to facilitate the decoding. Simulation results on short-length extended RS (eRS) codes and non-binary extended BCH (NB-eBCH) codes show that SCL decoding can outperform their state-of-the-art soft-decision decoding with fewer finite field arithmetic operations. Moreover, for length-16 eRS codes, their maximum-likelihood (ML) decoding performances can be approached with a moderate list size.

\textbf{Notation:} Let $\mathbb{F}_2$ denote the binary field and $\mathbb{F}_{2^r}$ subsequently denote its extension field of extension degree $r$. Further let $p(X)$ and $\alpha$ denote the primitive polynomial and the primitive element of $\mathbb{F}_{2^r}$, respectively. Given an element $\sigma \in \mathbb{F}_{2^r}$, it can be represented by $\sum_{j=0}^{r-1} \sigma_j \alpha^j$, where $\sigma_j \in \mathbb{F}_2$. Vector $(\sigma_0, \sigma_1, \cdots, \sigma_{r-1})$ is the binary composition of $\sigma$. For convenience, we also use $\sigma[j]$ to denote  $\sigma_{j}$. Given an integer set $\mathcal{A} \subset \{0,1,\cdots,N-1\}$, its cardinality and complement are denoted by $|\mathcal{A}|$ and $\mathcal{A}^c$, respectively.

\section{Preliminaries}

\subsection{Polar Codes}
Let us consider a polar code of length $N = 2^n$ and dimension $K$. With kernel matrix $\bm{{\rm F}} = ((1, 0), (1, 1))^{\rm T}$, its generator matrix is $\bm{{\rm G}}_{\rm p} = \bm{{\rm F}}^{\otimes n}$, where $\otimes n$ denotes the $n$-fold Kronecker product\cite{arikan2009polar}. With an input vector $\bm{u} = (u_0, u_1, \cdots, u_{N-1}) \in \mathbb{F}_{2}^N$, which is constituted by $K$ information symbols and $N-K$ frozen symbols, its codeword $\bm{c}$ is generated by
\begin{equation}
	\bm{c} = \bm{u}\bm{{\rm G}}_{\rm p}.
\end{equation}
The indices of information and frozen symbols constitute the information set $\mathcal{A}$ and the frozen set $\mathcal{A}^c$, respectively. Let $\bm{u}^{\mathcal{A}} = (u_i | i\in \mathcal{A})$ denote the message. Codeword $\bm{c}$ can be alternatively represented by
\begin{equation} \label{DF}
	\bm{c} = \bm{u}^{\mathcal{A}}\bm{{\rm M}}\bm{{\rm G}}_{\rm p},
\end{equation}
where $\bm{{\rm M}} \in \mathbb{F}_{2}^{K\times N}$ is a pre-transformed matrix of reduced row echelon form. Note that indices of the pivot columns in $\bm{{\rm M}}$ form $\mathcal{A}$. The frozen symbol with index $i$ is a linear combination of information symbols with indices smaller than $i$, i.e.,
\begin{equation} \label{df}
	u_i = \sum_{t=0}^{\tau_i} u^{\mathcal{A}}_t \cdot \bm{{\rm M}}_{t,i}, \, i\in \mathcal{A}^c,
\end{equation}
where $\tau_i = |\mathcal{A}\cap \{0,\cdots,i-1\}|$. It is referred to as a dynamic frozen symbol \cite{dynamicfrozen}. During SC and SCL decoding, all $u^{\mathcal{A}}_t$ with $t\leq \tau_i$ are estimated prior to $u_i$, which allows $u_i$ to be determined accordingly.

\subsection{Transformation From B-LBCs To Polar Codes}
For a B-LBC $\mathcal{C}_{\rm B}$ of length $N=2^n$ and dimension $K$, there exists a polar code with dynamic frozen symbols such that the one-to-one mapping between codewords of $\mathcal{C}_{\rm B}$ and the polar codewords can be established \cite{linhuang}. Let $\bm{{\rm G}}_{\rm B} \in \mathbb{F}_{2}^{K \times N}$ denote a generator matrix of $\mathcal{C}_{\rm B}$ and $\bm{m} \in \mathbb{F}_{2}^K$ denote a $K$-dimensional message. Given a permutation matrix $\bm{{\rm P}}\in \mathbb{F}_{2}^{N \times N}$, codebook $\mathcal{C}_{\rm B}$ can be defined as
\begin{equation} \label{BLBC}
	\begin{aligned}
		\mathcal{C}_{\rm B} &\triangleq \{\bm{c} = \bm{m}\bm{{\rm G}}_{\rm B} \: | \: \forall \bm{m} \in \mathbb{F}_{2}^K\} \\
		%&= \{\bm{c} = \bm{m}\bm{{\rm G}}_{\rm B}(\bm{{\rm P}}^{-1}\bm{{\rm G}}_{\rm p}^{-1}\bm{{\rm G}}_{\rm p}\bm{{\rm P}}) \: | \: \forall \bm{m} \in \mathbb{F}_{2}^K\} \\
		&= \{\bm{c} = \bm{u}\bm{{\rm G}}_{\rm p}\bm{{\rm P}} \: | \: \bm{u} = \bm{m}\bm{{\rm G}}_{\rm B}\bm{{\rm P}}^{-1}\bm{{\rm G}}_{\rm p}^{-1}, \forall \bm{m} \in \mathbb{F}_{2}^K\}.	
	\end{aligned}
\end{equation}
By performing Gaussian elimination (GE) on $\bm{{\rm G}}_{\rm B}\bm{{\rm P}}^{-1}\bm{{\rm G}}_{\rm p}^{-1}$, the pre-transformed matrix $\bm{{\rm M}} = \bm{{\rm E}}\bm{{\rm G}}_{\rm B}\bm{{\rm P}}^{-1}\bm{{\rm G}}_{\rm p}^{-1}$ can be obtained, where $\bm{{\rm E}} \in \mathbb{F}_{2}^{K \times K}$ is a row elimination matrix. Hence, 
\begin{equation} 
	\mathcal{C}_{\rm B} \triangleq \{\bm{c} = \bm{u}^{\mathcal{A}}\bm{{\rm M}}\bm{{\rm G}}_{\rm p}\bm{{\rm P}} \: | \: \forall \bm{u}^{\mathcal{A}} \in \mathbb{F}_{2}^K\}, 
\end{equation}
where $\bm{u}^{\mathcal{A}} = \bm{m}\bm{{\rm E}}^{-1}$. Therefore, each codeword of $\mathcal{C}_{\rm B}$ can be mapped to a permuted polar codeword with dynamic frozen symbols. $\mathcal{C}_{\rm B}$ can be decoded by estimating the polar codewords through SC or SCL decoding.

\section{Decomposition of NB-LBCs}
To facilitate SCL decoding of NB-LBCs, a mapping between NB-LBCs and binary polar codes is required. It is established by the following Theorem \ref{mapping}.
\begin{theorem} \label{mapping}
	Let $\mathcal{C}_{\rm NB}$ denote an $(N=2^n, K)$ NB-LBC defined over $\mathbb{F}_{2^r}$. The binary composition of any  codeword $\bm{c} \in \mathcal{C}_{\rm NB}$ can be represented as the concatenation of $r$ permuted binary polar codewords.
\end{theorem}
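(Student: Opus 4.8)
\noindent
The plan is to split a non-binary codeword into its $r$ bit-planes, observe that each bit-plane sweeps out a binary linear block code of length $2^n$ as $\bm{c}$ ranges over $\mathcal{C}_{\rm NB}$, and then apply the B-LBC-to-polar transformation of Section~II-B (following \cite{linhuang}) to each of these $r$ codes separately. For $j\in\{0,1,\dots,r-1\}$ I would introduce the coordinatewise projection $\pi_j\colon\mathbb{F}_{2^r}^N\to\mathbb{F}_2^N$, $\pi_j(\bm{v})=(v_0[j],v_1[j],\dots,v_{N-1}[j])$, which collects the coefficient of $\alpha^j$ in every coordinate of $\bm{v}$. Since writing an element of $\mathbb{F}_{2^r}$ on the basis $\{1,\alpha,\dots,\alpha^{r-1}\}$ is an $\mathbb{F}_2$-linear operation, each $\pi_j$ is $\mathbb{F}_2$-linear, and the binary composition of $\bm{c}$ is, up to the fixed reindexing that gathers bits plane by plane, the concatenation $\big(\pi_0(\bm{c})\,\big|\,\pi_1(\bm{c})\,\big|\,\cdots\,\big|\,\pi_{r-1}(\bm{c})\big)$. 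It therefore suffices to prove that each $\pi_j(\bm{c})$ is a permuted binary polar codeword of length $2^n$.

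The key step is to identify, for each $j$, the binary code that bit-plane $j$ lives in. Let $\bm{g}_1,\dots,\bm{g}_K\in\mathbb{F}_{2^r}^N$ be the rows of a generator matrix of $\mathcal{C}_{\rm NB}$. Viewed as an $\mathbb{F}_2$-vector space, $\mathcal{C}_{\rm NB}$ is spanned by the $rK$ vectors $\alpha^t\bm{g}_s$, $1\le s\le K$, $0\le t\le r-1$, because scalar multiplication by $\alpha^t$ is $\mathbb{F}_2$-linear and $\{1,\alpha,\dots,\alpha^{r-1}\}$ spans $\mathbb{F}_{2^r}$ over $\mathbb{F}_2$. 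Applying $\pi_j$ and using its $\mathbb{F}_2$-linearity, the image $\mathcal{C}_{\rm B}^{(j)}\triangleq\pi_j(\mathcal{C}_{\rm NB})$ is the $\mathbb{F}_2$-span of the vectors $\pi_j(\alpha^t\bm{g}_s)$, hence a binary linear block code of blocklength $N=2^n$; concretely, stacking these $rK$ binary row vectors — each obtained from $\bm{g}_s$ by the fixed binary matrix that implements multiplication by $\alpha^t$ in the chosen basis followed by the projection $\pi_j$ — and deleting dependent rows yields a binary generator matrix $\bm{{\rm G}}_{\rm B}^{(j)}\in\mathbb{F}_2^{K_j\times N}$ with $K_j=\dim\mathcal{C}_{\rm B}^{(j)}\le N$.

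With $\mathcal{C}_{\rm B}^{(j)}$ in hand, the remaining step is simply to invoke the transformation of Section~II-B: since $\mathcal{C}_{\rm B}^{(j)}$ is a binary linear block code of length $2^n$, there exist a permutation matrix $\bm{{\rm P}}_j\in\mathbb{F}_2^{N\times N}$, an information set $\mathcal{A}_j$, and a pre-transformed matrix $\bm{{\rm M}}_j\in\mathbb{F}_2^{K_j\times N}$ of row-reduced echelon form such that $\mathcal{C}_{\rm B}^{(j)}=\{\bm{u}^{\mathcal{A}_j}\bm{{\rm M}}_j\bm{{\rm G}}_{\rm p}\bm{{\rm P}}_j\}$, cf.~\eqref{BLBC}; hence every word of $\mathcal{C}_{\rm B}^{(j)}$, and in particular $\pi_j(\bm{c})$, is a binary polar codeword with dynamic frozen symbols right-multiplied by $\bm{{\rm P}}_j$. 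Concatenating over $j$ then exhibits the binary composition of $\bm{c}$ as $r$ permuted binary polar codewords, one drawn from each of the polar codes attached to $\mathcal{C}_{\rm B}^{(0)},\dots,\mathcal{C}_{\rm B}^{(r-1)}$, which is the assertion.

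The argument is short, so the main obstacle is careful bookkeeping rather than any deep difficulty: in the middle step one must pass correctly from the $\mathbb{F}_{2^r}$-linear description of $\mathcal{C}_{\rm NB}$ to the $\mathbb{F}_2$-linear description of its bit-planes, so that the hypotheses of the B-LBC-to-polar construction — a \emph{binary} linear block code of blocklength exactly $2^n$ — are genuinely met by each of the $r$ projections. It should also be made explicit that the $r$ codes $\mathcal{C}_{\rm B}^{(j)}$, and therefore their dimensions $K_j$, permutations $\bm{{\rm P}}_j$, information sets $\mathcal{A}_j$ and dynamic-frozen constraints, are in general all distinct, while the tuple map $\bm{c}\mapsto\big(\pi_0(\bm{c}),\dots,\pi_{r-1}(\bm{c})\big)$ is injective, so that no information about $\bm{c}$ is lost even though no single projection $\pi_j$ is.
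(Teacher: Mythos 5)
Your argument is sound for the theorem as literally worded, but it takes a genuinely different route from the paper's and establishes a strictly weaker structural fact. You project the whole codebook onto its $r$ bit-planes and apply the B-LBC-to-polar transformation of Section II-B to each image $\mathcal{C}_{\rm B}^{(j)}=\pi_j(\mathcal{C}_{\rm NB})$ separately, obtaining $r$ unrelated polar representations with their own permutations $\bm{{\rm P}}_j$, information sets $\mathcal{A}_j$ and dimensions $K_j$. The paper instead fixes a single permutation $\bm{{\rm P}}$, writes $\bm{c}=\bm{u}\bm{{\rm G}}_{\rm p}\bm{{\rm P}}$ with $\bm{u}=\bm{m}\bm{{\rm G}}_{\rm NB}\bm{{\rm P}}^{-1}\bm{{\rm G}}_{\rm p}^{-1}$ computed over $\mathbb{F}_{2^r}$, and exploits the fact that $\bm{{\rm G}}_{\rm p}\bm{{\rm P}}$ is a binary matrix, so encoding acts plane-wise as $\bm{c}^{\rm B}_j=\bm{u}^{\rm B}_j\bm{{\rm G}}_{\rm p}\bm{{\rm P}}$ per (\ref{cug}); one Gaussian elimination over $\mathbb{F}_{2^r}$ then yields a single pre-transformed matrix $\bm{{\rm M}}^{(r)}$, so all $r$ planes share the same permutation and the same size-$K$ information set $\mathcal{A}$, with the frozen symbols of every plane generated jointly by the dynamic constraints of (\ref{df}) evaluated in $\mathbb{F}_{2^r}$, i.e., coupling the planes. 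That shared structure is precisely what the rest of the paper consumes: Algorithm \ref{scl} and the $r$-step sorting need one common $\mathcal{A}$ and $\bm{{\rm P}}$ so that paths split symbol-wise into $2^r$ children and frozen symbols can be recomputed from the non-binary estimates $\hat{u}^{\mathcal{A}}_t$. Your decomposition does not deliver this: the per-plane dimensions $K_j$ are in general larger than $K$ (the frozen bits of plane $j$ depend on information bits of all planes, so each bit-plane image is a higher-rate binary code whose own polar representation captures only part of the redundancy of $\mathcal{C}_{\rm NB}$), the sets $\mathcal{A}_j$ and matrices $\bm{{\rm P}}_j$ need not coincide, and the cross-plane coupling through the $\mathbb{F}_{2^r}$-valued frozen constraints disappears. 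So your proof passes for the bare statement, but it would not support the decoder built on the theorem; the version to give is the one with a single $\bm{{\rm P}}$ and a single non-binary Gaussian elimination, from which the common information set and the jointly dynamic frozen symbols follow.
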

\begin{proof}
	Let $\bm{{\rm G}}_{\rm NB} \in \mathbb{F}_{2^r}^{K \times N}$ denote a generator matrix of $\mathcal{C}_{\rm NB}$. With a message $\bm{m} \in \mathbb{F}_{2^r}^K$, its codeword $\bm{c}$ is generated by $\bm{c} = \bm{m}\bm{{\rm G}}_{\rm NB}$. Given a permutation matrix $\bm{{\rm P}}$, $\bm{c}$ can also be represented by 
	\begin{equation} \label{cug}
		\bm{c} = \bm{u}\bm{{\rm G}}_{\rm p}\bm{{\rm P}},
	\end{equation}
	where $\bm{u} = \bm{m}\bm{{\rm G}}_{\rm NB}\bm{{\rm P}}^{-1}\bm{{\rm G}}_{\rm p}^{-1}$. Let 
	\begin{equation}
		\begin{aligned}
			\bm{u}^{\rm B} = (u_{0,0},\, \cdots,\, u_{0,r-1},\, &u_{1,0},\, \cdots,\, u_{1,r-1},\\ 
			\cdots,\,&u_{N-1,0},\, \cdots,\, u_{N-1,r-1})
		\end{aligned}
	\end{equation}
	denote the binary composition of $\bm{u}$. Furthermore, let
	\begin{equation}
		\begin{aligned}
			\bm{c}^{\rm B} = (c_{0,0},\, \cdots,\, c_{0,r-1},\, &c_{1,0},\, \cdots,\, c_{1,r-1},\\ 
			\cdots,\,&c_{N-1,0},\, \cdots,\, c_{N-1,r-1})
		\end{aligned}
	\end{equation}
	denote the binary composition of $\bm{c}$. Since $\bm{{\rm G}}_{\rm p}$ and $\bm{{\rm P}}$ are binary matrices, the multiplication between $\bm{u}$ and $\bm{{\rm G}}_{\rm p}\bm{{\rm P}}$ involves only $\mathbb{F}_{2^r}$ additions, which can be decomposed into $\mathbb{F}_{2}$ additions. For $j = 0, 1, \cdots, r-1$, let
	\begin{align}
		\bm{u}^{\rm B}_j &= (u_{0,j},\, u_{1,j},\,\cdots,\, u_{N-1,j}), \\
		\bm{c}^{\rm B}_j &= (c_{0,j},\, c_{1,j},\,\cdots,\, c_{N-1,j}).
	\end{align}
	Based on (\ref{cug}), one can obtain
	\begin{equation}
		\bm{c}^{\rm B}_j = \bm{u}^{\rm B}_j \bm{{\rm G}}_{\rm p}\bm{{\rm P}}.
	\end{equation}
	By performing GE on $\bm{{\rm G}}_{\rm NB}\bm{{\rm P}}^{-1}\bm{{\rm G}}_{{\rm p}}^{-1}$, the non-binary pre-transformed matrix
	$\bm{{\rm T}} = \bm{{\rm R}}\bm{{\rm G}}_{\rm NB}\bm{{\rm P}}^{-1}\bm{{\rm G}}_{{\rm p}}^{-1}$ can be obtained, where $\bm{{\rm R}} \in \mathbb{F}_{2^r}^{K \times K}$ is a row elimination matrix. Consequently, 
	\begin{equation}
		\bm{u} = \bm{u}^{\mathcal{A}}\bm{{\rm T}},
	\end{equation}
	where $\bm{u}^{\mathcal{A}} = \bm{m}\bm{{\rm R}}^{-1}$. Hence, $\bm{c}^{\rm B}_0, \bm{c}^{\rm B}_1, \cdots, \bm{c}^{\rm B}_{r-1}$ share an identical information set, i.e., indices of the pivot columns in $\bm{{\rm T}}$. The frozen symbols are determined as in (\ref{df}) but through $\mathbb{F}_{2^r}$ additions and multiplications. Therefore, $\bm{c}^{\rm B}$ is the concatenation of $r$ permuted binary polar codewords.
\end{proof}
Therefore, $\mathcal{C}_{\rm NB}$ can be decoded by estimating the $r$ binary polar codewords through SC or SCL decoding and then reconstructing the non-binary codeword. Obtaining the pre-transformed matrix $\bm{{\rm T}}$ requires a complexity of $\mathcal{O}(KN\text{log}_2N + NK^2)$. Since this can be performed  offline, it does not affect the overall decoding complexity. 

Theoretically, the SC decoding error probability of binary polar codes is upper bounded by $P^{\rm UB}_{\rm e} = \sum_{i\in\mathcal{A}}P_{\rm e}(\mathcal{W}_i)$, where $\mathcal{W}_i$ denotes the $i$-th polarized subchannel of a length-$N$ polar code and $P_{\rm e}(\mathcal{W}_i)$ denotes its error probability\cite{arikan2009polar}. This implies that the SC decoding performance of $\mathcal{C}_{\rm NB}$ is determined by the information set $\mathcal{A}$, which is further determined by the permutation matrix $\bm{{\rm P}}$. Hence, a permutation matrix that minimizes $P^{\rm UB}_{\rm e}$ should be chosen to optimize the SC decoding performance. Unfortunately, optimal design of $\bm{{\rm P}}$ remains to be proven. According to \cite{RS_SC, linhuang}, there exists an effective permutation for eRS codes, which is defined as
\begin{equation} \label{permutation}
	\bm{{\rm P}}_{a, b} = 
	\begin{cases}
		1, &\text{if} \:a = \sum_{j=0}^{m-1}(\alpha^b)[j]\cdot 2^{j}, 0 \leq b \leq N-2\\ &
		\:\text{or}\:a = 0, b = N-1; \\
		0, &\text{otherwise},
	\end{cases}
\end{equation}
where $\bm{{\rm P}}_{a,b}$ is the row-$a$ column-$b$ entry of $\bm{{\rm P}}$. It results in the first few subchannels being frozen, which helps reduce $P^{\rm UB}_{\rm e}$ since their error probabilities are relatively high. E.g., for code rate between  $0.25$ and $0.5$, $\mathcal{W}_0$, $\mathcal{W}_1$ and $\mathcal{W}_2$ are frozen regardless of code length, implying that the first information symbol is transferred over $\mathcal{W}_3$. Due to channel polarization \cite{arikan2009polar}, $P_{\rm e}(\mathcal{W}_3)$ increases with $N$, resulting in a corresponding increase in $P^{\rm UB}_{\rm e}$. Hence, when $\bm{{\rm P}}$ of (\ref{permutation}) is applied, SC decoding performance of eRS codes degrades as $N$ increases. In this paper, this $\bm{{\rm P}}$ is applied in SCL decoding of short-length eRS and NB-eBCH codes, which are extended to length $2^n$ by padding a parity symbol as $c_{2^n-1} = \sum_{i=0}^{2^n-2} c_i$. 

\section{SCL Decoding of NB-LBCs}

\subsection{SC and SCL Decoding}

Assume that codeword $\bm{c} = (c_0, c_1, \cdots, c_{N-1}) \in \mathbb{F}_{2^r}^N$ of $\mathcal{C}_{\rm NB}$ is transmitted over a memoryless channel and $\bm{y} = (y_0, y_1, \cdots, y_{N-1}) \in \mathbb{R}^N$ is the received vector. Let $\bm{c}^{\rm P} = \bm{c}\bm{{\rm P}}^{-1}$ and $\bm{y}^{\rm P} = \bm{y}\bm{{\rm P}}^{-1}$. Let
\begin{equation}
	\begin{aligned}
		\bm{\mathcal{L}} = (\mathcal{L}_{0,0},\, \cdots,\, \mathcal{L}_{0,r-1},\, &\mathcal{L}_{1,0},\, \cdots,\, \mathcal{L}_{1,r-1},\\ 
		\cdots,\, &\mathcal{L}_{N-1,0},\, \cdots,\, \mathcal{L}_{N-1,r-1})
	\end{aligned}
\end{equation}
denote the log-likelihood ratio (LLR) vector with entries defined as
\begin{equation}
	\mathcal{L}_{i,j} = \text{ln}\frac{p(y_i^{\rm P}|c_{i,j}^{\rm P}=0)}{p(y_i^{\rm P}|c_{i,j}^{\rm P}=1)},
\end{equation}
where $i = 0, 1, \cdots, N-1$ and $j = 0, 1, \cdots, r-1$. These LLRs are partitioned into $r$ groups, each of which is the input LLR vector of an SC decoder. In particular, the input LLR vector of the $j$-th SC decoder is
\begin{equation}
	\bm{\mathcal{L}}_{j}^{(n)} = (\mathcal{L}_{j,0}^{(n)},\, \mathcal{L}_{j,1}^{(n)},\, \cdots,\, \mathcal{L}_{j,N-1}^{(n)}),
\end{equation}
where $\mathcal{L}_{j,i}^{(n)} = \mathcal{L}_{i,j}$. For $0\leq s \leq n-1$, the stage-$s$ LLRs are computed by \cite{llr_scl}
\begin{equation} \label{llr_update}
	\begin{aligned}
		\mathcal{L}_{j,i}^{(s)} &= f(\mathcal{L}_{j,i}^{(s+1)}, \mathcal{L}_{j,i+2^s}^{(s+1)}), \\
		\mathcal{L}_{j,i+2^s}^{(s)} &= (-1)^{\hat{u}_{j,i}^{(s)}}\mathcal{L}_{j,i}^{(s+1)} + \mathcal{L}_{j,i+2^s}^{(s+1)},
	\end{aligned}
\end{equation}
where $f(\mathcal{X}, \mathcal{Y}) \triangleq \text{ln}\frac{e^\mathcal{X}e^\mathcal{Y}+1}{e^\mathcal{X}+e^\mathcal{Y}}$ and $\mathcal{X},\mathcal{Y}\in \mathbb{R}$. When reaching stage-$0$, hard decisions are made based on the LLRs, i.e.,
\begin{equation}
	\eta_{j, i} = 
	\begin{cases}
		0, \quad &\text{if}\, \mathcal{L}_{j, i}^{(0)}\geq 0; \\
		1, \quad &\text{otherwise}.
	\end{cases}
\end{equation}
Then the binary input symbols are estimated by 
\begin{equation} 
	\hat{u}_{j,i}^{(0)} = 
	\begin{cases}
		\eta_{j, i},\quad &\text{if}\, i \in \mathcal{A}; \\
		(\sum_{t=0}^{\tau_i} \hat{u}^{\mathcal{A}}_t \cdot \bm{{\rm T}}_{t,i})[j],\quad &\text{if}\, i \in \mathcal{A}^c,
	\end{cases}
\end{equation}
where $\hat{u}^{\mathcal{A}}_t$ is the estimation of the $t$-th non-binary information symbol. The estimations of the non-binary input symbols is determined by
\begin{equation}
	\hat{u}_i = \sum_{j=0}^{r-1} \hat{u}_{j,i}^{(0)}\alpha^j,
\end{equation}
where $\alpha$ again is the primitive element of $\mathbb{F}_{2^r}$. For $1 \leq s \leq n$, the stage-$s$ binary estimations are computed by
\begin{equation}
	\begin{aligned}
		\hat{u}_{j,i}^{(s)} &= \hat{u}_{j,i}^{(s-1)} + \hat{u}_{j,i+2^{s-1}}^{(s-1)}, \\
		\hat{u}_{j,i+2^{s-1}}^{(s)} &= \hat{u}_{j,i+2^{s-1}}^{(s-1)}.
	\end{aligned}
\end{equation}
Therefore, the SC decoding of $\mathcal{C}_{\rm NB}$ is performed by running $r$ SC decoders and estimating $\hat{\bm{u}} = (\hat{u}_0, \hat{u}_1, \cdots, \hat{u}_{N-1})$ in a symbol-by-symbol manner. 

SCL decoding of $\mathcal{C}_{\rm NB}$ evolves from the above SC decoding by considering all possible values for information symbols, i.e.,
\begin{equation}
	\hat{u}_{i} =
	\begin{cases}
		\forall \sigma \in \mathbb{F}_{2^r}, \quad &\text{if}\, i\in \mathcal{A}; \\
		\sum_{t=0}^{\tau_i} \hat{u}^{\mathcal{A}}_t \cdot \bm{{\rm T}}_{t,i}, \quad &\text{if}\, i\in \mathcal{A}^c.
	\end{cases}
\end{equation}
For each information symbol, a decoding path is split into $2^r$ paths. In order to curb this exponentially increasing complexity, only the $L$ most likely paths are kept by the decoder. For this, a path metric should be defined to measure the likelihood of the paths. Assume that there are $L$ surviving paths after estimating $\hat{u}_{i-1}$. Their path metrics are defined as \cite{llr_scl}
\begin{equation} \label{froz}
	\Phi_{i-1}(l) = \sum_{h=0}^{i-1}\sum_{j:\hat{u}^{(0)}_{j,h}(l) \neq \eta_{j, h}(l)} |\mathcal{L}_{j, h}^{(0)}(l)|,
\end{equation}
where $0 \leq l \leq L-1$. Paths with smaller metrics are more likely to be correct. If $i\in \mathcal{A}$, the $l$-th decoding path is split into $2^r$ decoding paths with metrics
\begin{equation} \label{info}
	\Phi_i(l,\sigma) = \Phi_{i-1}(l) + \sum_{j:\sigma_j \neq \eta_{j, i}(l)} |\mathcal{L}_{j, i}^{(0)}(l)|, \quad \forall \sigma \in \mathbb{F}_{2^r}.
\end{equation}
Overall, $2^rL$ decoding paths are generated. A $(2^rL)$-to-$L$ path pruning is performed to select the $L$ paths with the smallest metrics. When $i = N-1$, the decoding path with the smallest metric will be selected as the decoding result. 

Algorithm \ref{scl} summarizes the above SCL decoding process. Complexity of the LLR computation and path sorting are $\mathcal{O}(rLN\text{log}_2N)$ and  $\mathcal{O}(K2^rL\text{log}_2(2^rL))$, respectively. Assume that operations of (\ref{llr_update}) are completed in one clock cycle and the $r$ SC decoders operate in parallel, the number of required clock cycles of the SCL decoding is $2N-2+KT_{\text{sort}}$, where $T_{\text{sort}}$ denotes the number of required clock cycles for path sorting.  

\begin{algorithm}[t] \label{scl}
	\caption{SCL Decoding of NB-LBCs}
	\For{$i = 0, 1, \cdots, N-1$}{
		Compute $\mathcal{L}_{j,i}^{(0)}(l)$ as in (\ref{llr_update}) \;
		\If{$i \in \mathcal{A}$}{
			Compute $\Phi_i(l,\sigma)$ as in (\ref{info})\;
			Select $L$ paths with the smallest path metrics\;
		}
		\Else{
			Compute $\hat{u}_{i}(l) = \sum_{t=0}^{\tau_i} \hat{u}^{\mathcal{A}}_t(l) \cdot \bm{{\rm T}}_{t,i}$\;
			Compute $\Phi_i(l)$ as in (\ref{froz})\;
		}
	}
	Select the decoding path with the smallest path metric\;
	Reconstruct $\hat{\bm{c}}$ as in (\ref{cug})\;
\end{algorithm}
\begin{comment}
	\begin{figure}[t]
		\centerline{\includegraphics[trim=0 15 0 20, width=0.45\textwidth]{sort.png}}
		%\centerline{\includegraphics[trim=0 15 0 20, width=0.7\textwidth]{sort.png}}
		%\vspace{-3mm}
		\caption{Block diagram of the $r$-step path sorting ($L = 4$).}
		\label{sort}
	\end{figure}
\end{comment}

\begin{algorithm}[t] \label{sort}
	\caption{Simplified Path Sorting}
	\KwIn{$\mathcal{L}_{j,i}^{(0)}(l)$,\,$\Phi_{i-1}(l)$;}
	\KwOut{$\Phi_{i}(l)$;}
	Sort $\Phi_{i-1}(0),\cdots,\Phi_{i-1}(L-1)$, yielding $\tilde{\bm{x}}$ and $\bm{p}$\;
	\For{$j = 0, 1, \cdots, r-1$}{
		$x_l^+ = \tilde{x}_l + |\mathcal{L}_{j,i}^{(0)}(p_l)|, l = 0, 1, \cdots, L-1$\;
		Sort $\bm{x}^+$, yielding $\tilde{\bm{x}}^+$ and $\bm{p}^+$\;
		Perform merge sort on $\tilde{\bm{x}}$ and $\tilde{\bm{x}}^+$ and keep the $L$ smallest values in $\tilde{\bm{x}}$\;
		Update $\bm{p}$ as the original indices of paths in $\tilde{\bm{x}}$\;
	}
	$\{\Phi_{i}(l) : l = 0, 1, \cdots, L-1\} \leftarrow \tilde{\bm{x}}$\;
\end{algorithm}

\subsection{Simplified Path Sorting} \label{reduce}
In the above SCL decoding, redundant comparisons exist in the full path sorting at each information symbol. E.g., let us consider $\beta, \gamma \in \mathbb{F}_{2^r}$ and their binary compositions only differ at the $j'$-th component, i.e., for $j = 0, 1, \cdots, j'-1, j'+1, \cdots, r-1$, $\beta_j = \gamma_j$. If $\beta_{j'} = \eta_{j', i}(l)$ and $\gamma_{j'} \neq \eta_{j', i}(l)$, then based on (\ref{info}), one can immediately obtain $\Phi_i(l,\beta) \leq \Phi_i(l,\gamma)$ without explicit comparison. This observation leads to a simplified path sorting \cite{fastSCL_journal}. It eliminates redundant path splits and unnecessary path metric comparisons during the full path sorting process.

It starts by sorting the path metrics of the surviving paths in ascending order. Let $\tilde{x}_l = \Phi_{i-1}(p_l)$, where $p_l \in \{0, 1, \cdots, L-1\}$ denotes the original path index corresponding to the $l$-th smallest path metric. The ordered metric vector and its associated index vector are given by
\begin{align}
	\tilde{\bm{x}} &= (\tilde{x}_0,\, \tilde{x}_1,\, \cdots,\, \tilde{x}_{L-1}), \\
	\bm{p} &= (p_0,\, p_1,\, \cdots,\, p_{L-1}),
\end{align}
where $\tilde{x}_0 \leq \tilde{x}_1 \leq \cdots \leq \tilde{x}_{L-1}$. 

At step-$j$, where $0 \leq j \leq r-1$, let us define $x_l^+ = \tilde{x}_l + |\mathcal{L}_{j,i}^{(0)}(p_l)|$, where $l = 0, 1, \cdots, L-1$. This yields the updated metric vector
\begin{equation}
	\bm{x}^+ = (x_0^+,\, x_1^+,\, \cdots,\, x_{L-1}^+).
\end{equation}
Sorting $\bm{x}^+$ in ascending order yields $\tilde{x}_l^+ = x_{p_l^+}^+$, where $p_l^+ \in \{0, 1, \cdots, L-1\}$ denotes the original index of the $l$-th smallest value in $\bm{x}^+$. The ordered version of $\bm{x}^+$ and its associated index vector are thus
\begin{align}
	\tilde{\bm{x}}^+ = (\tilde{x}_0^+,\, \tilde{x}_1^+,\, \cdots,\, \tilde{x}_{L-1}^+), \\
	\bm{p}^+ = (p_0^+,\, p_1^+,\, \cdots,\, p_{L-1}^+),
\end{align}
where $\tilde{x}_0^+ \leq \tilde{x}_1^+ \leq \cdots \leq \tilde{x}_{L-1}^+$. Since $\tilde{x}_l^+ = x_{p_l^+}^+$ and $x_{p_l^+}^+ =  \tilde{x}_{p_l^+} + |\mathcal{L}_{j,i}^{(0)}(p_{p_l^+})|$, the decoding path with metric $\tilde{x}_l^+$ is split from the ($p_{p_l^+}$)-th original path. By performing a merge sort on $\tilde{\bm{x}}$ and $\tilde{\bm{x}}^+$, the $L$ smallest values among the two sequences can be selected. The ordered metric vector $\tilde{\bm{x}}$ is then updated with these values and their original path indices are recorded in $\bm{p}$. The remaining elements in $\tilde{\bm{x}}$ and $\tilde{\bm{x}}^+$ are discarded, since path metrics increase monotonically at each step and any metric that is not among the $L$ smallest values at the current step cannot become one of the $L$ smallest in the subsequent steps.

After step-$(r-1)$, the $L$ paths with the smallest path metrics can be selected from all $2^rL$ paths. Algorithm \ref{sort} summarizes the above sorting process. During it, $(r+1)$ full sorts of length-$L$ sequences and $r$ merge sorts between two ordered length-$L$ sequences are required. The overall sorting complexity is reduced from $\mathcal{O}(2^rL\text{log}_2(2^rL))$ to $\mathcal{O}((r+1)L\text{log}_2L + rL)$.

\section{Simulation Results}

This section shows our simulation results on SCL decoding of eRS codes and NB-eBCH codes. They are obtained over the additive white Gaussian noise (AWGN) channel using BPSK modulation, where noise variance is $\frac{N_0}{2}$. The signal-to-noise ratio (SNR) is defined as $\frac{E_{\rm b}}{N_0}$, where $E_{\rm b}$ is the transmitted energy per information bit. Note that the Chase-BM decoding that filps the $\eta$ least reliable positions is denoted as $\text{Chase-BM}\,(\eta)$. The KV decoding with a maximum output list size of $L$ is denoted as $\text{KV}\,(L)$.

\begin{figure}[t]
	\centerline{\includegraphics[trim=0 15 0 20, width=0.50\textwidth]{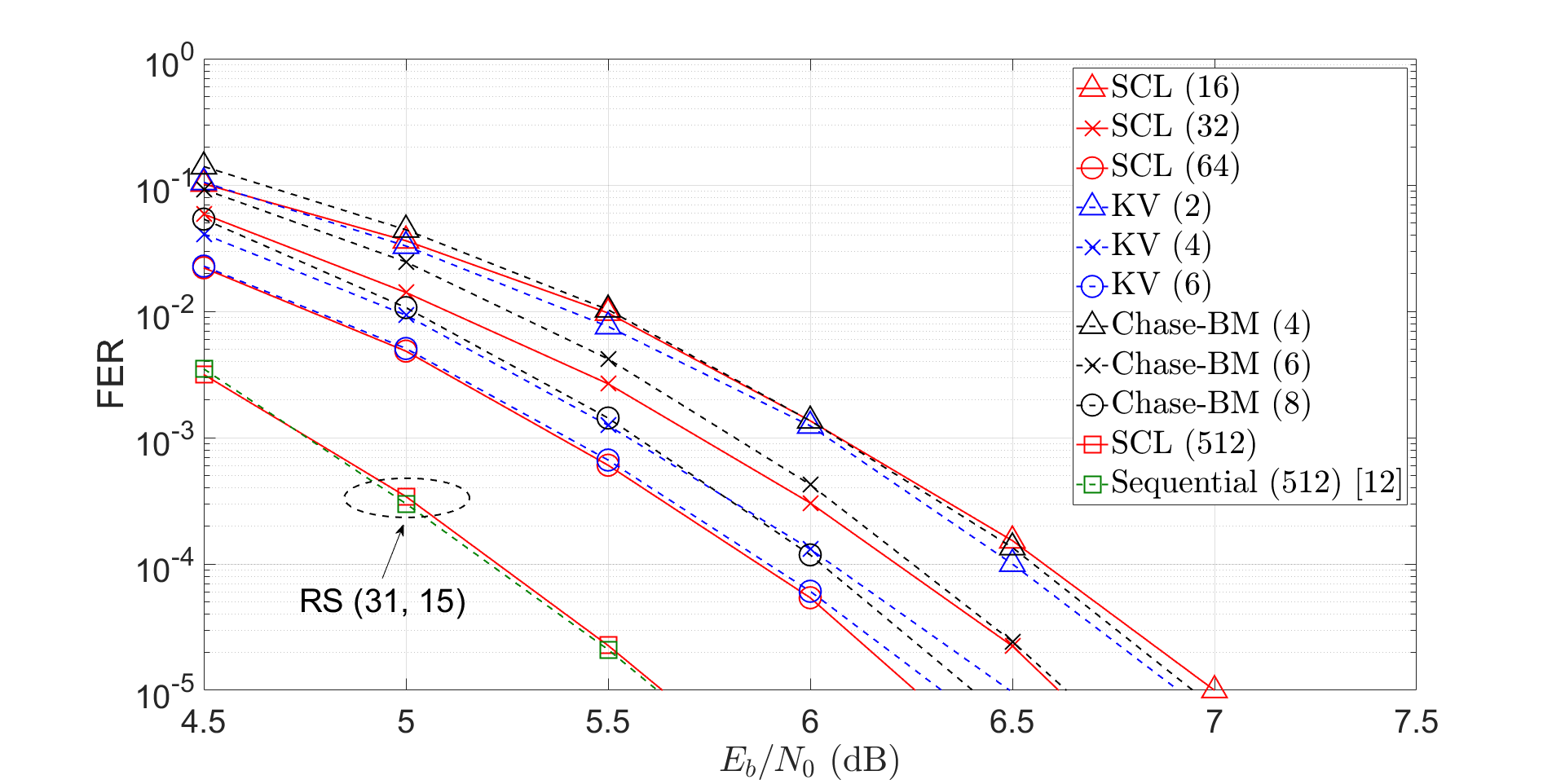}}
	%\centerline{\includegraphics[trim=0 15 0 20, width=0.9\textwidth]{eRS_32_15.png}}
	\vspace{-3mm}
	\caption{SCL decoding performance of the $(32, 15)$ eRS code.}
	\label{eRS32}
\end{figure}

\begin{table}[t] 
	\caption{Decoding Complexity of the $(32, 15)$ eRS Code.}
	\vspace{-2mm}
	\centering
	\renewcommand{\arraystretch}{1.1}
	\label{complexity1}
	\footnotesize
	\begin{tabular}{ccc}\toprule[0.8pt]
		Scheme & $\mathbb{F}_{2^5}$ oper. & FLOPs  \\  \midrule
		$\text{SCL}(16)$    & $3.11\times10^{3}$ &  $1.41\times10^4$ \\ 
		$\text{SCL}(32)$    & $6.08\times10^{3}$ &  $3.27\times10^4$ \\ 
		$\text{SCL}(64)$    & $1.19\times10^{4}$ &  $7.32\times10^4$ \\ 
		$\text{KV}(2)$   & $6.37\times10^{4}$ & $6.80\times10^4$   \\ 
		$\text{KV}(4)$   & $5.06\times10^{5}$ & $1.13\times10^5$   \\ 
		$\text{KV}(6)$   & $2.05\times10^{6}$ & $1.59\times10^5$   \\ 
		$\text{Chase-BM}(4)$ & $4.93\times10^{4}$ & $7.99\times10^3$   \\ 
		$\text{Chase-BM}(6)$ & $2.00\times10^{5}$ & $8.71\times10^3$  \\ 
		$\text{Chase-BM}(8)$ & $8.08\times10^{5}$ & $1.14\times10^4$   \\ 
		\bottomrule[0.8pt]
	\end{tabular}
\end{table}

\begin{table}[!t] 
	\caption{FLOPs Required for SCL Decoding of the $(32, 15)$ eRS Code.}
	\vspace{-2mm}
	\centering
	\renewcommand{\arraystretch}{1.1}
	\label{reduction}
	\footnotesize
	\begin{tabular}{ccc}\toprule[0.8pt]
		& Full path sorting & Simplified path sorting   \\ \midrule
		$\text{SCL}(16)$   & $9.29\times10^{4}$ & $1.41\times10^4$ \\ 
		$\text{SCL}(32)$   & $2.05\times10^{5}$ & $3.27\times10^4$ \\ 
		$\text{SCL}(64)$   & $6.97\times10^{5}$ & $7.32\times10^4$ \\ 
		\bottomrule[0.8pt]
	\end{tabular}
\end{table}

Fig. \ref{eRS32} compares the frame error rate (FER) performance of the $(32, 15)$ eRS code under SCL, KV and Chase-BM  decoding. Table \ref{complexity1} further compares their decoding complexity at the SNR of $6\: \text{dB}$. It can be seen that SCL decoding can outperform KV and Chase-BM decoding with fewer finite field arithmetic operations. E.g., the $\text{SCL}\,(64)$  decoding not only slightly outperforms the $\text{KV}\,(6)$ decoding, but also reduces the number of finite field arithmetic operations by two orders of magnitude. Compared with the $\text{Chase-BM}\,(8)$ decoding, the $\text{SCL}\,(64)$ decoding yields a $0.1\,\text{dB}$ performance gain while reducing the number of finite field arithmetic operations by an order of magnitude. Fig. \ref{eRS32} also compares the performance of the $\text{SCL}\,(512)$ decoding and the sequential decoding with a maximum list size of $512$ \cite{RS_SC} on the $(31, 15)$ RS code, which can be seen as a punctured $(32, 15)$ eRS code. They achieve a similar performance. Sequential decoding with a maximum list size of $L$ has a worst-case complexity of $\mathcal{O}(L(N\text{log}_2^2N + N^2))$\cite{RS_SC}, which is higher than that of the $\text{SCL}\,(L)$ decoding, i.e., $\mathcal{O}(rLN\text{log}_2N)$.  Table \ref{reduction} shows that the proposed simplified path sorting reduces the number of FLOPs by an order of magnitude.

Fig. \ref{NBBCH} compares the FER performance of the $(64, 27)$ NB-eBCH code under the SCL, BM and Chase-BM decoding. The NB-eBCH code is defined over $\mathbb{F}_{4}$. Table \ref{complexity2} further compares their decoding complexity at the SNR of $6\: \text{dB}$. Similarly, SCL decoding reduces the number of finite field arithmetic operations at the cost of an increased number of FLOPs. E.g., while the $\text{SCL}\,(64)$ and $\text{Chase-BM}\,(10)$ decoding achieve similar FER performances, the $\text{SCL}\,(64)$ decoding requires two orders of magnitude fewer finite field arithmetic operations and an order of magnitude more FLOPs. 

\begin{figure}[t]
	\centerline{\includegraphics[trim=0 15 0 20, width=0.50\textwidth]{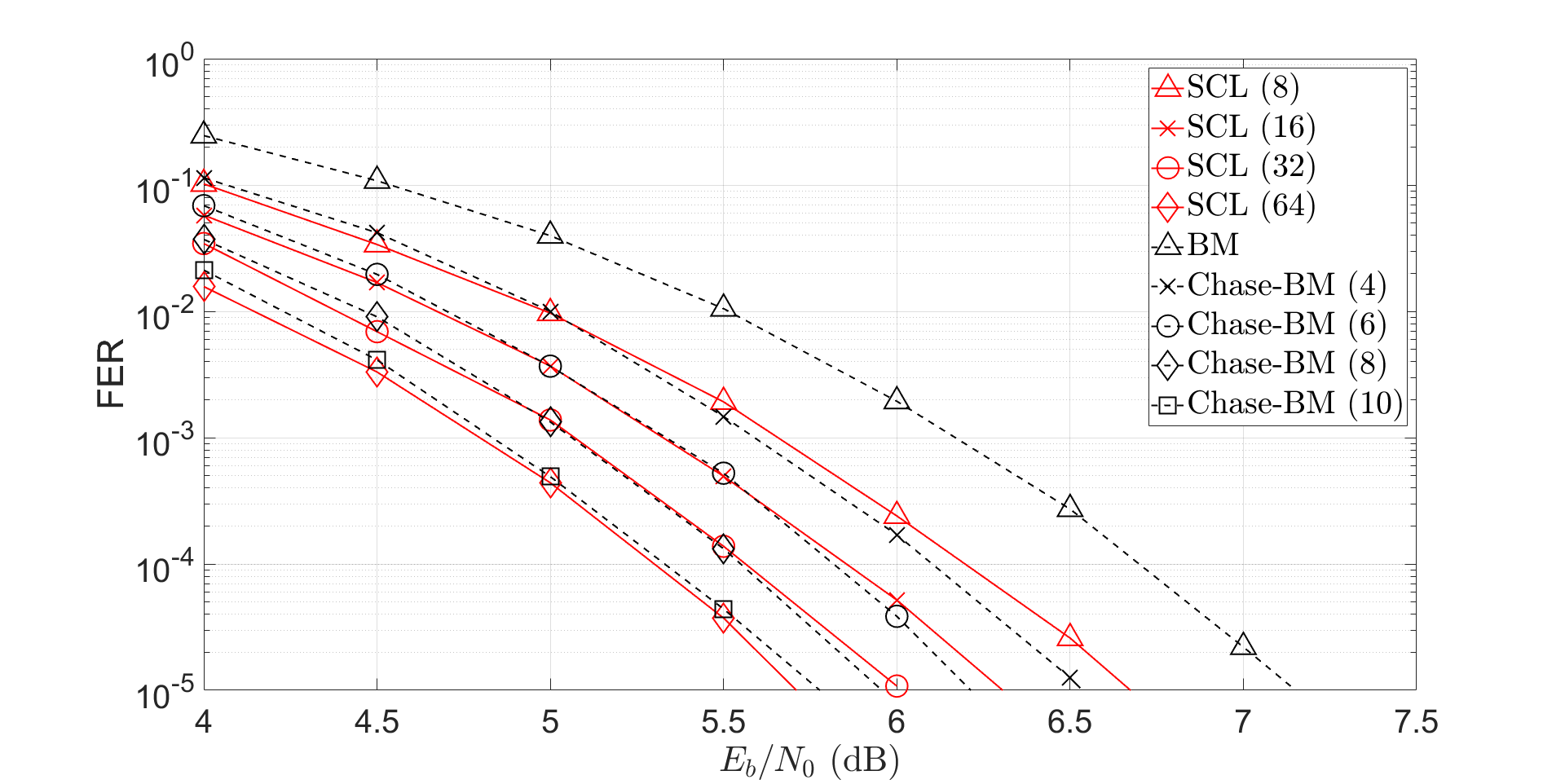}}
	%\centerline{\includegraphics[trim=0 15 0 20, width=0.9\textwidth]{NB_BCH_64_27.png}}
	\vspace{-3mm}
	\caption{SCL decoding performance of the $(64, 27)$ NB-eBCH code.}
	\label{NBBCH}
\end{figure}

\begin{table}[t] 
	\caption{Decoding Complexity of the $(64, 27)$ NB-eBCH Code.}
	\vspace{-2mm}
	\centering
	\renewcommand{\arraystretch}{1.1}
	\label{complexity2}
	\footnotesize
	\begin{tabular}{ccc}\toprule[0.8pt]
		Scheme & $\mathbb{F}_{4}$ oper. & FLOPs   \\ \midrule
		$\text{SCL}(16)$   & $8.31\times10^{3}$ & $1.52\times10^4$ \\ 
		$\text{SCL}(32)$   & $1.63\times10^{4}$ & $3.28\times10^4$ \\ 
		$\text{SCL}(64)$   & $3.23\times10^{4}$ & $7.13\times10^4$ \\ 
		%$\text{Chase-BM}(4)$ & $1.15\times10^{5}$ & $2.45\times10^3$  \\ 
		$\text{Chase-BM}(6)$ & $4.63\times10^{5}$ & $2.77\times10^3$  \\ 
		$\text{Chase-BM}(8)$ & $1.87\times10^{6}$ & $4.04\times10^3$  \\ 
		$\text{Chase-BM}(10)$ & $7.57\times10^{6}$ & $9.06\times10^3$  \\ 
		\bottomrule[0.8pt]
	\end{tabular}
\end{table}

Finally, Fig. \ref{eRS16} shows the FER performance of SCL decoding of the $(16, 7)$ eRS code. The ML decoding upper and lower bounds \cite{RS_bound}, denoted as MLUB and MLLB, are also shown. When the list size $L=128$, the SCL decoding approaches the MLUB. However, as the code length increases, approaching the MLUB becomes increasingly difficult, since the SC decoding suffers from performance degradation when the permutation matrix of (\ref{permutation}) is applied. Therefore, the SCL decoding would require a larger list size to achieve a good performance.

\begin{figure}[t]
	\centerline{\includegraphics[trim=0 15 0 20, width=0.50\textwidth]{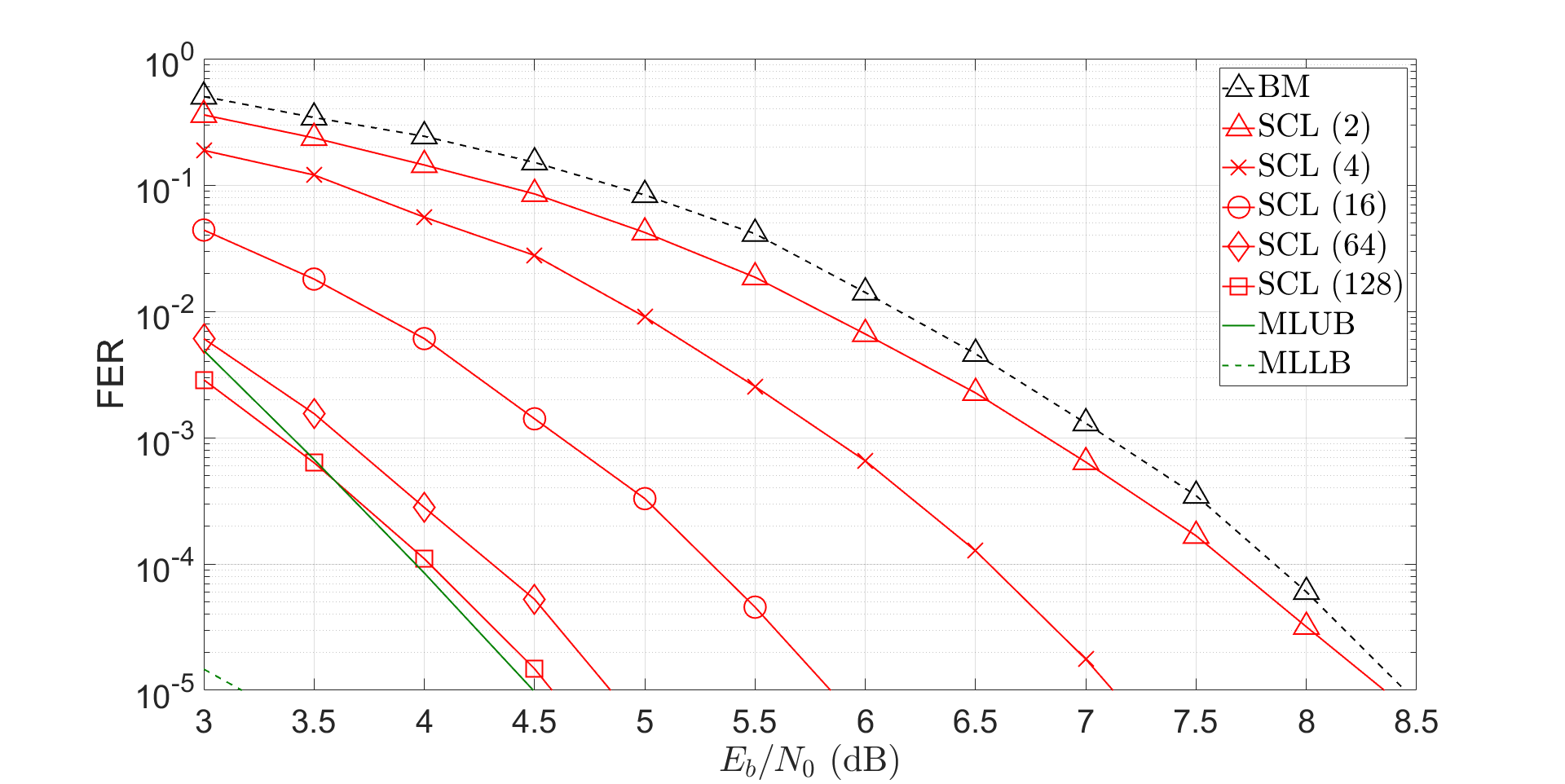}}
	%\centerline{\includegraphics[trim=0 15 0 20, width=0.9\textwidth]{eRS_16_7.png}}
	\vspace{-3mm}
	\caption{SCL decoding performance of the $(16, 7)$ eRS code.}
	\label{eRS16}
\end{figure}

\begin{comment}
	\begin{table}[!t] 
		\caption{List sizes required to achieve the MLUB of eRS codes.}
		\vspace{-2mm}
		\centering
		\renewcommand{\arraystretch}{1.1}
		\label{MLUB}
		\footnotesize
		\begin{tabular}{cccccc}\toprule[0.8pt]
			$(N,K)$ & $(16,5)$ & $(16,7)$ & $(16,9)$ & $(16,11)$ & $(32,5)$   \\ \midrule
			$L$ & $64$ & $128$ & $256$ & $512$ & $1024$ \\ 
			\bottomrule[0.8pt]
		\end{tabular}
	\end{table}
\end{comment}

%\section{Conclusion}
%This paper has proposed the SCL decoding for NB-LBCs, which is an efficient soft-decision list decoding approach. It can outperform the KV and Chase-BM decoding and requires significantly fewer finite-field operations. Moreover, for length-$16$ eRS codes, the proposed SCL decoding can approach their ML decoding performances. 

\bibliographystyle{IEEEtran} 
\bibliography{MyLibrary}

@article{bose1960class,
	author    = {Bose, Raj and Ray-Chaudhuri, Dwijendra},
	journal   = {Inf. and Contr.},
	title     = {On a class of error correcting binary group codes},
	year      = {1960},
	number    = {1},
	pages     = {68--79},
	volume    = {3},
	publisher = {Elsevier},
}

@article{hocquenghem1959codes,
	title={Codes correcteurs d'erreurs. Chiffres (Paris), 2, 147-156},
	author={Hocquenghem, A},
	journal={Math. Rev},
	volume={22},
	pages={652},
	year={1959}
}

@article{arikan2009polar,
	author  = {Arıkan, Erdal},
	journal = {IEEE Trans. Inf. Theory},
	title   = {Channel polarization: A method for constructing capacity-achieving codes for symmetric binary-input memoryless channels},
	year    = {2009},
	month = {Jun.},
	number  = {7},
	pages   = {3051-3073},
	volume  = {55},
}

@article{niu2012crc,
	author={K. Niu and K. Chen},
	journal={IEEE Commun. Lett.},
	title={{CRC}-aided decoding of polar codes},
	year={2012},
	month={Oct.},
	volume={16},
	number={10},
	pages={1668-1671},
}

@article{tal2015list,
	title={List decoding of polar codes},
	author={{Tal}, Ido and {Vardy}, Alexander},
	journal={{IEEE} Trans. Inf. Theory},
	volume={61},
	number={5},
	pages={2213--2226},
	year={2015},
	month={Mar.},
	publisher={IEEE}
}

@article{linhuang,
	author={Lin, Chien-Ying and Huang, Yu-Chih and Shieh, Shin-Lin and Chen, Po-Ning},
	journal={IEEE Open J. Commun. Soc.}, 
	title={Transformation of Binary Linear Block Codes to Polar Codes With Dynamic Frozen}, 
	year={2020},
	month={Mar.},
	volume={1},
	number={},
	pages={333-341},
	keywords={Polar codes;Maximum likelihood decoding;Generators;Heuristic algorithms;Channel coding;Polar codes;binary linear block codes and soft decoding},
	doi={10.1109/OJCOMS.2020.2979529}
}

@INPROCEEDINGS{dynamicfrozen,
	author={Trifonov, Peter and Miloslavskaya, Vera},
	booktitle={IEEE Inf. Theory Workshop (ITW)}, 
	title={Polar codes with dynamic frozen symbols and their decoding by directed search}, 
	year={2013},
	month={Sep.},
	address={Seville, Spain},
	volume={},
	number={},
	pages={1-5},
	keywords={Heuristic algorithms;Vectors;Iterative decoding;Error probability;Maximum likelihood decoding},
	doi={10.1109/ITW.2013.6691213}
}

@article{llr_scl,
	author={Balatsoukas-Stimming, Alexios and Parizi, Mani Bastani and Burg, Andreas},
	journal={IEEE Trans. Signal Process.}, 
	title={{LLR}-Based Successive Cancellation List Decoding of Polar Codes}, 
	year={2015},
	month={Jun.},
	volume={63},
	number={19},
	pages={5165-5179},
	keywords={Decoding;Hardware;Signal processing algorithms;Throughput;Receivers;Sorting;Indexes;CRC-aided successive cancellation list decoder;hardware implementation;polar codes;successive cancellation decoder;successive cancellation list decoder},
	doi={10.1109/TSP.2015.2439211}
}

@article{Chase,
	author={Chase, D.},
	journal={{IEEE} Trans. Inf. Theory}, 
	title={Class of algorithms for decoding block codes with channel measurement information}, 
	year={1972},
	month={Jan.},
	volume={18},
	number={1},
	pages={170-182},
	keywords={},
	doi={10.1109/TIT.1972.1054746}}

@INPROCEEDINGS{RS_bound,
	author={M. EI-Khamy and R. J. McEliece},
	booktitle={42nd Allerton Conf. Commun. Control Comput.}, 
	title={Bounds on the average binary minimum distance and the maximum likelihood performance of {Reed-Solomon} codes}, 
	year={2004},
	month = {Sep.},
	address = {Monticello, U.S.A.},
	volume={},
	number={},
	pages={290-299},
}

@article{reed1960polynomial,
	title={Polynomial codes over certain finite fields},
	author={Reed, Irving S and Solomon, Gustave},
	journal={Journal of the society for industrial and applied mathematics},
	volume={8},
	number={2},
	pages={300--304},
	year={1960},
	publisher={SIAM}
}

@article{goppa1977codes,
	title={Codes associated with divisors},
	author={Goppa, Valerii Denisovich},
	journal={Problemy Peredachi Informatsii},
	volume={13},
	number={1},
	pages={33--39},
	year={1977},
	publisher={Russian Academy of Sciences, Branch of Informatics, Computer Equipment and~…}
}

@book{berlekamp,
	title={Algebraic coding theory},
	author={Berlekamp, Elwyn R},
	year={1968},
	publisher={New York: McGraw-Hill}
}

@article{massey2003shift,
	title={Shift-register synthesis and {BCH} decoding},
	author={Massey, James},
	journal={{IEEE} Trans. Inf. Theory},
	volume={15},
	number={1},
	pages={122--127},
	year={2003},
	month={Jan.},
	publisher={IEEE}
}

@article{GS,
	title={Improved decoding of {Reed-Solomon} and algebraic-geometry codes},
	author={Guruswami, Venkatesan and Sudan, Madhu},
	journal={{IEEE} Trans. Inf. Theory},
	volume={45},
	number={6},
	pages={1757-1767},
	month={Sep.},
	year={1999},
	publisher={IEEE}
}

@article{koetter2003algebraic,
	title={Algebraic soft-decision decoding of {Reed-Solomon} codes},
	author={Koetter, Ralf and Vardy, Alexander},
	journal={{IEEE} Trans. Inf. Theory},
	volume={49},
	number={11},
	pages={2809--2825},
	month={Nov.},
	year={2003},
	publisher={IEEE}
}

@ARTICLE{linhuang2,
	author={{Lin}, Chien-Ying and Huang , Yu-Chih and Shieh, Shin-Lin and Chen, Po-Ning },
	journal={{IEEE} Trans. Commun.}, 
	title={Toward Universal Decoding of Binary Linear Block Codes via Enhanced Polar Transformations}, 
	year={2025},
	month={Jul.},
	volume={73},
	number={11},
	pages={10117-10129},
	keywords={Codes;Polar codes;Maximum likelihood decoding;Kernel;Heuristic algorithms;Complexity theory;Transforms;Search problems;Encoding;Simulated annealing},
	doi={10.1109/TCOMM.2025.3587044}
}

@ARTICLE{xingchen,
	author={Xing, Jiongyue and Chen, Li and Bossert, Martin},
	journal={IEEE Trans. Commun.}, 
	title={Progressive Algebraic Soft-Decision Decoding of {Reed–Solomon} Codes Using Module Minimization}, 
	year={2019},
	month={Nov},
	volume={67},
	number={11},
	pages={7379-7391},
	keywords={Decoding;Interpolation;Complexity theory;Minimization;Iterative decoding;Transforms;Simulation;Algebraic soft-decision decoding;complexity reduction;module minimization;progressive interpolation;Reed–Solomon codes},
	doi={10.1109/TCOMM.2019.2927207}}

@ARTICLE{RS_SC,
	author={Trifonov, Peter},
	journal={Probl. Inf. Transm. }, 
	title={Successive cancellation decoding of {Reed-Solomon} codes}, 
	year={2014},
	month={Oct.},
	volume={50},
	number={4},
	pages={303-312}
}

@INPROCEEDINGS{RS_SCP,
	author={{Trifonov}, Peter},
	booktitle={IEEE Inf. Theory Workshopp (ITW)}, 
	title={Successive cancellation permutation decoding of {Reed-Solomon} codes}, 
	year={2014},
	month={Nov.},
	address = {Hobart, Australia},
	volume={},
	number={},
	pages={386-390},
	keywords={Iterative decoding;Reed-Solomon codes;Heuristic algorithms;Complexity theory;Vectors;Maximum likelihood decoding},
	doi={10.1109/ITW.2014.6970859}
}

@INPROCEEDINGS{RS_SCS,
	author={Miloslavskaya, Vera and Trifonov, Peter},
	booktitle={Int. Symp. Inf. Theory Appl. (ISITA)}, 
	title={Sequential decoding of {Reed-Solomon} codes}, 
	year={2014},
	month={Oct.},
	address = {Victoria, Canada},
	volume={},
	number={},
	pages={453-457},
	keywords={Decoding;Reed-Solomon codes;Heuristic algorithms;Complexity theory;Vectors;Random variables;Error probability},
	doi={}
}

@ARTICLE{broadcast,
	author={Sun, He and Viterbo, Emanuele and Dai, Bin and Liu, Rongke},
	journal={IEEE Trans. Broadcast.}, 
	title={Fast Decoding of Polar Codes for Digital Broadcasting Services in {5G}}, 
	year={2024},
	month={Jan.},
	volume={70},
	number={2},
	pages={731-738},
	keywords={Decoding;Codes;Reliability;Polar codes;Multimedia communication;5G mobile communication;Encoding;Physical broadcast channel;physical downlink control channel;error control coding;5G/6G},
	doi={10.1109/TBC.2023.3345642}
}

@ARTICLE{Simplified,
	author={Alamdar-Yazdi, Amin and Kschischang, Frank R.},
	journal={IEEE Commun. Lett.}, 
	title={A Simplified Successive-Cancellation Decoder for Polar Codes}, 
	year={2011},
	month={Oct.},
	volume={15},
	number={12},
	pages={1378-1380},
	keywords={Decoding;Complexity theory;Vectors;Echo cancellers;Source coding;Generators;Polar codes;successive-cancellation decoding},
	doi={10.1109/LCOMM.2011.101811.111480}
}

@ARTICLE{fastSCL_journal,
	author={Hashemi, Seyyed Ali and Condo, Carlo and Gross, Warren J.},
	journal={IEEE Trans. Signal Process.}, 
	title={Fast and Flexible Successive-Cancellation List Decoders for Polar Codes}, 
	year={2017},
	month={Aug.},
	volume={65},
	number={21},
	pages={5756-5769},
	keywords={Hardware;Throughput;Maximum likelihood decoding;Estimation;Algorithm design and analysis;Signal processing algorithms;Polar codes;successive-cancellation decoding;list decoding;hardware implementation},
	doi={10.1109/TSP.2017.2740204}
}

\end{document}